\documentclass[year=26,pdfa]{fmcad}

\usepackage{amsmath,amssymb,amsthm}
\usepackage{enumitem}
\usepackage{booktabs}
\usepackage{xcolor}
\usepackage{comment}
\usepackage{graphicx}
\usepackage{subcaption}
\newtheorem{definition}{Definition}[section]

\newtheorem{theorem}{Theorem}[section]

\newtheorem{proposition}[theorem]{Proposition}

\theoremstyle{definition}

\theoremstyle{remark}

\usepackage{tikz}
\usetikzlibrary{arrows.meta}      
\usetikzlibrary{positioning}      
\usetikzlibrary{shapes.geometric} 
\usetikzlibrary{calc}             
\usetikzlibrary{decorations}   

\usepackage[backend=biber,style=numeric,sorting=none]{biblatex}
\usepackage{csquotes}
\begin{document}

\title{Understanding CDCL Solvers via\\ Scalability Studies and Proofdoors}
\author{
  \IEEEauthorblockN{Shimin Zhang}
  \IEEEauthorblockA{Georgia Tech, Atlanta, USA\\
  szhang957@gatech.edu}
  \and
  \IEEEauthorblockN{Yechuan Xia}
  \IEEEauthorblockA{East China Normal University, Shanghai, China\\
  xiaozi465@gmail.com}
  \and
  \IEEEauthorblockN{Chunxiao Li}
  \IEEEauthorblockA{Extreme Networks, Toronto, Canada\\
  chunxiao.li@uwaterloo.ca}
  \and
  \IEEEauthorblockN{Jianwen Li}
  \IEEEauthorblockA{East China Normal University, Shanghai, China\\
  lijwen2748@gmail.com}
  \and
  \IEEEauthorblockN{Moshe Y. Vardi}
  \IEEEauthorblockA{Rice University, Houston, USA\\
  vardi@rice.edu}
  \and
  \IEEEauthorblockN{Vijay Ganesh}
  \IEEEauthorblockA{Georgia Tech, Atlanta, USA\\
  vganesh45@gatech.edu}
}

\maketitle

\begin{abstract}
Over the past several decades, CDCL SAT solvers have proven remarkably effective on large industrial formulas, despite SAT being NP-complete and widely believed to be intractable. While considerable empirical research has been done on solver performance over benchmarks like the SAT competition, as well as scaling studies on random and crafted families, surprisingly little effort has gone into systematic scaling studies over industrial instances. To address this gap, we collect a large benchmark of Bounded Model Checking (BMC) instances (76,600 across 766 families) and perform a systematic scaling study of solver performance. We observe a spectrum: some families scale linearly, others polynomially or exponentially.

Building on this foundation, we study the structural parameters that have been proposed to explain this phenomenon. We first show that previously proposed parameters---clause-variable ratio, treewidth, and community structure---fail to discriminate between the linear and exponential regimes. By contrast, the recently proposed \emph{proofdoor} parameter explains this phenomenon well. Informally, a proofdoor is a sequence of interpolants between chunks of a formula, where each interpolant represents the solver's memoization of reasoning effort on chunks it has already analyzed. In support of the proofdoor hypothesis, we make three key contributions. First, we empirically show that CDCL solvers usually do compute small proofdoors for linearly-scaling BMC instances. Second, we show that for exponentially-scaling instances, sampled proofdoors scale exponentially and are typically not incrementally absorbed. Third, we show that scrambling linearly-scaling instances yields larger proofdoor sizes relative to pre-scrambling, relating poor branching orders to larger proofdoor sizes and degradation in solver performance.
\end{abstract}

\begin{figure*}[t]
    \centering

    \begin{tikzpicture}[
      scale=0.7, transform shape,
      font=\scriptsize,
      chunk/.style={
        draw, rounded corners=2pt,
        minimum width=1.10cm, minimum height=0.68cm,
        fill=blue!10, align=center
      },
      interp/.style={
        draw, rounded corners=2pt,
        minimum width=0.50cm, minimum height=1.95cm,
        fill=blue!18
      },
      slice/.style={draw=blue!40, line width=0.22pt},
      active/.style={fill=blue!35, opacity=0.9, draw=blue!60, line width=0.3pt},
      arr/.style={-Latex, thick, color=blue!70},
      dots/.style={font=\scriptsize},
      cone/.style={fill=red!70, opacity=0.25, draw=none}
    ]
    
    \newcommand{\SliceInterp}[2]{%
      \foreach \t in {1,...,#2} {
        \draw[slice]
          ($(#1.south west)!\t/(#2+1)!(#1.north west)$) --
          ($(#1.south east)!\t/(#2+1)!(#1.north east)$);
      }%
    }
    
    \node[chunk] (A1) {$A_1$\\[-1mm]\tiny pw $\le w$};
    \node[chunk, right=8mm of A1] (A2) {$A_2$\\[-1mm]\tiny pw $\le w$};
    \node[dots,  right=5mm of A2] (D1) {$\cdots$};
    \node[chunk, right=6mm of D1] (Ajm1) {$A_{j-1}$\\[-1mm]\tiny pw $\le w$};
    \node[chunk, right=8mm of Ajm1] (Aj) {$A_j$\\[-1mm]\tiny pw $\le w$};
    \node[dots,  right=7mm of Aj] (D2) {$\cdots$};
    
    \node[chunk, right=8mm of D2]   (Akm2) {$A_{K-2}$\\[-1mm]\tiny pw $\le w$};
    \node[chunk, right=8mm of Akm2] (Akm1) {$A_{K-1}$\\[-1mm]\tiny pw $\le w$};
    \node[chunk, right=8mm of Akm1] (Ak)   {$A_K$\\[-1mm]\tiny pw $\le w$};
    
    \node[interp, below=7mm of A1]   (I1) {};
    \node[interp, below=7mm of A2]   (I2) {};
    \node[dots,  below=7mm of D1]    (Id1) {$\cdots$};
    \node[interp, below=7mm of Ajm1] (Ijm1) {};
    \node[interp, below=7mm of Aj]   (Ij) {};
    \node[dots,  below=7mm of D2]    (Id2) {$\cdots$};
    
    \node[interp, below=7mm of Akm2] (Ikm2) {};
    
    \node[
      draw, rounded corners=2pt,
      minimum width=0.50cm,
      minimum height=0.95cm, 
      fill=blue!18,
      below=7mm of Akm1
    ] (Ikm1) {};
    
    \node[below=1mm of I1]   {$I_1$};
    \node[below=1mm of I2]   {$I_2$};
    \node[below=1mm of Ijm1] {$I_{j-1}$};
    \node[below=1mm of Ij]   {$I_j$};
    \node[below=1mm of Ikm2] {$I_{K-2}$};
    \node[below=1mm of Ikm1] {$I_{K-1}$};
    
    \node[below=3.5mm of Ikm1, font=\tiny, color=blue!80]
    {$\leq s$ clauses};

    \node[below=3.5mm of I1, font=\tiny, color=blue!80]
    {$\leq c$ clauses};
    \node[below=3.5mm of I2, font=\tiny, color=blue!80]
    {$\leq c$ clauses};
    \node[below=3.5mm of Ijm1, font=\tiny, color=blue!80]
    {$\leq c$ clauses};
    \node[below=3.5mm of Ij, font=\tiny, color=blue!80]
    {$\leq c$ clauses};
    \node[below=3.5mm of Ikm2, font=\tiny, color=blue!80]
    {$\leq c$ clauses};
    
    \SliceInterp{I1}{6}
    \SliceInterp{I2}{6}
    \SliceInterp{Ijm1}{6}
    \SliceInterp{Ij}{6}
    \SliceInterp{Ikm2}{6}
    \SliceInterp{Ikm1}{3} 
    
    \draw[arr] (A1.south)   -- (I1.north);
    \draw[arr] (A2.south)   -- (I2.north);
    \draw[arr] (Ajm1.south) -- (Ijm1.north);
    \draw[arr] (Aj.south)   -- (Ij.north);
    
    \draw[arr] (Akm2.south) -- (Ikm2.north);
    \draw[arr] (Akm1.south) -- (Ikm1.north);
    
    \draw[arr] (I1.east)   -- (I2.west);
    \draw[arr] (I2.east)   -- (Id1.west);
    \draw[arr] (Id1.east)  -- (Ijm1.west);
    \draw[arr] (Ijm1.east) -- (Ij.west);
    \draw[arr] (Ij.east)   -- (Id2.west);
    \draw[arr] (Id2.east)  -- (Ikm2.west);
    \draw[arr] (Ikm2.east) -- (Ikm1.west);

    \path (Ijm1.south west) coordinate (SWp);
    \path (Ijm1.south east) coordinate (SEp);
    \path (Ijm1.north west) coordinate (NWp);
    \path (Ijm1.north east) coordinate (NEp);
    
    \path ($(SWp)!5/7!(NWp)$) coordinate (ActLowL);
    \path ($(SEp)!5/7!(NEp)$) coordinate (ActLowR);
    
    \fill[active] (ActLowL) rectangle (NEp);
    
    \node[font=\tiny, color=blue!80, align=right]
      at ($(ActLowL)+(-0.25cm,0.20cm)$) {$\le s$};
    
    \path (Ij.south west) coordinate (SWj);
    \path (Ij.south east) coordinate (SEj);
    \path (Ij.north west) coordinate (NWj);
    \path (Ij.north east) coordinate (NEj);
    
    \path ($(SWj)!3/7!(NWj)$) coordinate (CloL);
    \path ($(SEj)!3/7!(NEj)$) coordinate (CloR);
    \path ($(SWj)!4/7!(NWj)$) coordinate (ChiL);
    \path ($(SEj)!4/7!(NEj)$) coordinate (ChiR);
    
    \fill[blue!25] (CloL) rectangle (ChiR);
    \draw[blue!80, line width=0.6pt] (CloL) rectangle (ChiR);
    \node[font=\scriptsize, color=blue!85] at ($(CloL)!0.5!(ChiR)$) {$C$};
    
    \coordinate (Tip) at ($(CloL)!0.5!(ChiL)$);
    
    \coordinate (BaseAleft)  at (Aj.south west);
    \coordinate (BaseAright) at (Aj.south east);
    \fill[cone] (BaseAleft) -- (BaseAright) -- (Tip) -- cycle;
    
    \coordinate (BaseIupper) at ($(NEp)+(0.00cm,-0.05cm)$);
    \coordinate (BaseIlower) at ($(ActLowR)+(0.00cm,+0.05cm)$);
    \fill[cone] (BaseIupper) -- (BaseIlower) -- (Tip) -- cycle;

    \node[
      draw, rounded corners=2pt, fill=blue!10,
      minimum width=0.95cm, minimum height=0.58cm,
      right=7mm of Ikm1, align=center
    ] (Bot) {$\bot$};
    
    \draw[arr] (Ak.south) -- (Bot.north);
    \draw[arr] (Ikm1.east) -- (Bot.west);
    
    \node[font=\tiny, below=1mm of Bot]
    {$I_{K-1}\wedge A_K \vdash \bot$};

    \fill[cone]
      (Ikm1.north east) --
      (Ikm1.south east) --
      (Bot.west) --
      cycle;
    
    \end{tikzpicture}

    \caption{Structure of a proofdoor (reproduced from~\cite{singh2026}) }
\label{fig:proofdoor-structure}
\end{figure*}

\vspace{-0.225cm}
\section{Introduction} 
SAT-based Bounded Model Checking (BMC) is one of the biggest success stories of the SAT solver revolution~\cite{clarke2001bmc,copty2001benefits}. At the same time, we know that SAT is NP-complete~\cite{cook1971complexity} and is believed to be intractable in general. This gap between theory and practice has remained a puzzle for over three decades, despite considerable efforts by theorists and practitioners to resolve it~\cite{ganesh2021unreasonable}.

What is even more surprising is that although this puzzle is very well-known, and despite the fact that considerable empirical work has been done to understand the performance of CDCL SAT solvers over industrial, crafted, and randomly-generated instances~\cite{zulkoski2018thesis, li-hcs, 
zulkoski2018structural, newsham2014impact, AnsoteguiGiraldezCruLevy2012}, little effort has gone into systematically studying the {\bf scaling behavior} of solvers over families of large industrial instances. Informally, a scaling study aims to understand a solver's performance as a function over a family of formulas of increasing size. 

Unfortunately, benchmarking of solvers (e.g., via the SAT competition~\cite{satcomp2024}) or prior empirical work to link solver performance to structural parameters by themselves are not a substitute for scaling studies. They typically cannot deepen our understanding of solver performance like a scaling study can.

Scaling behavior is important from both theoretical and practical perspectives. On the theoretical side, scaling behavior is the essence of computational complexity theory. On the practical side, when performing BMC, we wish to explore deep circuit unfolding. Scaling behavior with respect to unfolding depth determines how deeply we can explore.

Further, scaling studies allow us to cleanly separate instances into different equivalence classes based on a solver's performance on them, e.g., linearly-scaling, polynomially-scaling (i.e., more than linear time but not exponential), and exponentially-scaling. With this kind of data, hypothesis testing of parameters that explain solver performance can be done much more systematically than otherwise.

\subsection{Scaling Study via BMC Instances}
In this spirit, we ran a scaling study on 766 BMC families
from the HWMCC benchmark suite~\cite{hwmcc19}. For each family we generated
CNF instances at unfolding depths $K = 1, \ldots, 100$ using
\texttt{simpleCAR}~\cite{supercar2024}, yielding $76{,}600$
formulas in total, and measured solving time using a modern CDCL
solver (CaDiCaL~\cite{biere2024cadical})
with inprocessing, preprocessing, and clause deletion turned off. 

Our version of CaDiCaL has some similarity vis-a-vis the theoretical CDCL SAT solver configurations considered in the papers by Atserias et al.~\cite{AtseriasFichteThurley2011} and by Singh et al~\cite{singh2026}. Important differences include the restart policy (our version of CaDiCaL does not restart after every conflict, unlike the theoretical models). Further, the model considered by Singh et al. uses non-deterministic branching and value selection, while the one considered by Atserias et al. uses uniformly random decision and uniformly random value selection.

We organize the data for each family into $(K, time)$
pairs, smooth the curve with a running-maximum envelope to remove
even/odd oscillations, and fit linear, polynomial, and exponential
models, labeling each family by the best fit (Section~\ref{BMCCategorization}
gives the full procedure). The result is a spectrum: \textbf{333 families
scale linearly, 268 polynomially, 148 exponentially}, and 17 are
unclassified. These data form the foundation for our subsequent work on studying various structural parameters, specifically, clause-variable ratio (CVR), treewidth, community structure, and proofdoors. 

\subsection{Proofdoors and Other Structural Parameters}

Until the recent introduction of proofdoors~\cite{singh2026}, no single structural parameter of SAT formulas has managed to explain solver behavior over the scaling regimes studied in this paper and bridge the gap between theory and practice. Our results reinforce this conclusion.

Consider, for example, the structural parameter of \emph{treewidth}. It is known that a formula of  bounded treewidth can be solved in time that is exponential in the treewidth and linear in the size of the formula, provided that the tree decomposition is given~\cite{dalmau2002constraint,szeider}. BMC benchmarks have a linear structure, so computing their tree decompositions (in fact, even path decompositions) is straightforward. But algorithms that take advantage of treewidth typically use dynamic-programming (DP), rather than search, and DP algorithms would scale linearly (in the unfolding depth) on \emph{all} BMC benchmarks. By contrast, we observe a spectrum of scaling behaviors for CDCL solvers. Thus, treewidth fails to explain our empirical observations.

A variety of backdoors have been proposed and studied extensively as possible explanations for understanding solver performance, and while they are theoretically very attractive, they do not stand up to empirical scrutiny~~\cite{zulkoski2018thesis}. Another structural parameter that has been studied at length is community structure~\cite{newsham2014impact,AnsoteguiGiraldezCruLevy2012} and its hierarchical counterpart~\cite{li-hcs}. While community structure can sometimes explain empirical performance of solvers, they have proved to be difficult to work with in theory. For example, the theoretical results we have obtained over hierarchical community structure require very strong assumptions~\cite{li-hcs}. What is worse, we show that community-structure modularity is uniformly high across both linear and exponential BMC families in our benchmark, so it cannot discriminate the two regimes. Beyond this theory-practice gap, existing parameters describe formula structure but not how solvers exploit it~\cite{audemard2009,oh2016empirical}.

To better understand solver performance on BMC instances, consider what we do know about solver behavior from past empirical work~\cite{jimmyvsids2015}: the solver focuses on a small part (i.e., a {\it local region}) of the input formula, analyzes it to derive conflict clauses, and then moves on to other parts. While the solver may revisit previously visited parts, this is quite rare for BMC-like instances. This behavior can be captured via the idea of a {\it sequence of interpolants} (This sequence is illustrated in Figure ~\ref{fig:proofdoor-structure}). That is, the solver derives an interpolant between the part of the formula that it analyzes and the rest of the formula, and having derived useful clauses, it then moves on to other parts of the formula and repeats until termination. Each interpolant can be viewed as memoizing the solver's effort in analyzing a local region of the formula, somewhat akin to dynamic programming. If done right, the solver may never need to revisit previously analyzed parts of an input formula. 

In a recent companion theory paper~\cite{singh2026}, we formalized this construction and called such a sequence of interpolants a \emph{proofdoor}. We further defined the concept of a \emph{small proofdoor}, proved that formulas admitting a small proofdoor have polynomial-sized resolution proofs, and that an idealized model of CDCL solvers can find them efficiently. Unlike previously studied structural parameters, which were mostly graph-theoretic, we argue that our parameter is {\it semantic}, i.e., takes into account {localized reasoning} behavior of the CDCL solver.

We validate the explanatory power of proofdoors on the BMC benchmark we study in this paper: on linearly-scaling families, we find that CDCL solvers {\it incrementally} compute\footnote{To be precise, proofdoors are {\it absorbed}, as described later. We use absorption as a proxy for computing or deriving clauses.} small proofdoors, while on exponentially-scaling families they typically do not. In fact, on exponentially-scaling instances, CDCL solvers compute exponentially-scaling proofdoors. We also show that scrambling a linearly-scaling instance leads the solver to discover larger proofdoors than their pre-scrambled version and slows them down accordingly. Section~\ref{BMCCategorization} and Section~\ref{EofScalabilityRQList} detail these findings.

\noindent{\textbf{Contributions.}}
\begin{itemize}
\item \textbf{Scalability study.}\footnote{All of our code and data can be found at the following website: https://github.com/ShiminZhang/ProofDoorTools} We present a scalability study of 766 BMC families (76,600 instances) that systematically uncovers a spectrum of CDCL scaling behaviors --- linear, polynomial, and exponential --- across instances from the same benchmark, with no obvious syntactic distinction between regimes.

\item \textbf{Study of Previously-proposed Parameters.} We find that some of the previously-proposed structural parameters to explain the efficiency of CDCL SAT solvers on industrial formulas --- such as treewidth~\cite{mateescu2011treewidth}, clause-variable ratio~\cite{cheeseman1991where}, and (hierarchical) community structure~\cite{li-hcs,AnsoteguiBonetGiraldezCruLevySimon2019} --- do not separate linear from exponential BMC families.
  
\item \textbf{Proofdoors explain CDCL Efficiency.} We present empirical evidence that the proofdoor parameter discriminates between linearly-scaling vs. exponentially-scaling BMC instances. More precisely, on linear families, CDCL solvers incrementally compute proofdoors, and these proofdoors are small; on exponential families the proofdoors we sample from across the interpolant lattice are exponential and typically are not incrementally computed. We further show that, when linearly-scaling formulas are scrambled, the initial variable orders chosen by CaDiCaL's branching heuristic over scrambled instances result in larger proofdoor size relative to the same instances pre-scrambling, establishing another correlation between proofdoor size and degradation in solver performance. We also argue that our empirical results are consistent with the theoretical explanations in Singh et al.~\cite{singh2026}. 
\end{itemize}

\section{Related Work}
\label{sec:related}

\subsection{Scaling Studies on Random and Crafted Instances}
Systematic scaling studies have a long history for random and crafted SAT instances. For random $k$-SAT, Mitchell et al.~\cite{mitchell1992hard} and Cheeseman et al.~\cite{cheeseman1991where} identified the clause-variable ratio as the key scaling parameter. For crafted families, Chew et al.~\cite{chew2024parity} empirically link CDCL runtime on parity encodings to treewidth. The families used in these studies differ sharply from industrial BMC.

\subsection{Graph-theoretic Parameters}
Treewidth and pathwidth of the primal or incidence graph of a Boolean formula yield fixed-parameter SAT algorithms via dynamic programming over a tree decomposition~\cite{szeider,GaspersSzeider2013}, but industrial formulas often have large treewidth and pathwidth~\cite{mateescu2011treewidth}. Thus, by themselves, these parameters fail as an explanation for why CDCL solvers are efficient over BMC instances.

\subsection{Community Structure and HCS}
Ans{\'o}tegui et al. observed that industrial CNFs have high modularity on the variable-incidence graph, unlike random formulas~\cite{AnsoteguiBonetGiraldezCruLevySimon2019}, and tagging learned clauses by spanned communities improves clause deletion in modern CDCL~\cite{AnsoteguiGiraldezCruLevySimon2015}. Newsham et al. showed that there is a correlation between solver performance and modularity~\cite{newsham2014impact}. Li et al.\ replaced the flat partition with a recursive one called Hierarchical Community Structure (HCS)~\cite{li-hcs}, finding that depth and leaf size of the hierarchical community structure~\cite{li-hcs} track CDCL runtime more tightly than modularity. While these parameters do provide some empirical explanation for solver behavior, the corresponding theoretical results (i.e., the resolution proof size upper bounds) require strong and unrealistic assumptions on the formula structure or solver configurations~\cite{li-hcs}.

\subsection{Backdoors and Backbones}
We did not empirically or analytically study other commonly studied parameters such as backdoors~\cite{WilliamsGomesSelman2003, zulkoski2018learning}, backbones~\cite{kilby2005backbones}, or merge~\cite{zulkoski2018structural}.  These parameters, however, have been extensively studied by Zulkoski et al. and ruled out as explaining the power of CDCL solvers for industrial instances~\cite{zulkoski2018thesis}. A big difference between (strong or weak or other varieties) backdoors and proofdoors is that the former are a set of variables, while the latter are sequences of sets of clauses, where these sets of clauses are interpolants between {\it localized} parts of the formula. 

\subsection{Proof complexity of CDCL}
Pipatsrisawat and Darwiche show that CDCL with restarts p-simulates general resolution~\cite{PipatsrisawatDarwiche}; Atserias et al. sharpen this to bounded-width resolution~\cite{AtseriasFichteThurley2011}. These results show that CDCL (with non-deterministic branching) is polynomially equivalent to resolution, 
but do not show that the existence of a short resolution proof translates to the existence of a short run of CDCL.

\section{Preliminaries}
\subsection{Interpolation}
\label{sec:prelim:interp}
Let $A \wedge B$ be an unsatisfiable Boolean formula in CNF. An \emph{interpolant} for $(A,B)$ is a formula $I$ in CNF with
\begin{enumerate}
  \item $A \Rightarrow I$,
  \item $I \wedge B$ unsatisfiable, and
  \item $\text{Var}(I) \subseteq \text{Var}(A) \cap \text{Var}(B)$.
\end{enumerate}
$I$ summarizes why $A$ refutes $B$ over the shared variables alone. We say that \emph{$I$ interpolates from $A$ to $B$} when $I$ is an interpolant for the pair $(A,B)$.


\subsection{McMillan Interpolation}
\label{app:mcmillan}

Let $A$ and $B$ be sets of clauses such that $A \wedge B$ is
unsatisfiable, and fix a resolution refutation of $A \wedge B$.
McMillan~\cite{McMillan2003} gave a constructive procedure that
extracts an interpolant directly from such a refutation. The algorithm
traverses the proof bottom-up. Each input clause $C \in A$ is labeled
with the disjunction of the literals in $C$ whose variables occur in
both $A$ and $B$, while each input clause $C \in B$ is labeled with
$\top$. At a resolution step, the partial interpolants $I_1$ and $I_2$
of the premises are combined according to the pivot variable $p$:
\[
  I =
  \begin{cases}
    I_1 \lor I_2,
      & \text{if $p$ is local to $A$,}\\
    I_1 \land I_2,
      & \text{if $p$ is shared or local to $B$.}
  \end{cases}
\]
The partial interpolant associated with the empty clause is an
interpolant for $(A,B)$.

The construction runs in time linear in the size of the refutation and
produces a Boolean circuit, or equivalently a formula DAG, of linear
size. Thus, interpolant extraction incurs only linear overhead when a
SAT solver provides a compatible resolution proof. With the underlying refutation fixed and understood from context, we
refer to the formula produced by this procedure as the
\emph{McMillan interpolant} of $(A,B)$.

\subsection{Strongest and weakest interpolants}

\begin{proposition}[Strongest and weakest interpolants~\cite{DSilva2010}]
\label{prop:strongest-weakest-interpolant}

Let $(A, B)$ be an unsatisfiable pair of propositional formulas in CNF, and let
$L_A = \text{Var}(A) \setminus \text{Var}(B)$ and
$L_B = \text{Var}(B) \setminus \text{Var}(A)$ denote the variables
local to $A$ and $B$, respectively. Then:
\begin{enumerate}
  \item $\exists L_A.\, A$ is an interpolant of $(A, B)$, and for every
        interpolant $I$ of $(A, B)$, $\exists L_A.\, A \models I$.
  \item $\neg \exists L_B.\, B$ is an interpolant of $(A, B)$, and for every
        interpolant $I$ of $(A, B)$, $I \models \neg \exists L_B.\, B$.
\end{enumerate}
\end{proposition}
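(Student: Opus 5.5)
The plan is to verify both parts directly from the three defining conditions of an interpolant in Section~\ref{sec:prelim:interp}, reading $\exists L_A.\,A$ as a Boolean function over $\text{Var}(A)\setminus L_A = \text{Var}(A)\cap\text{Var}(B)$. The one technical point is that the definition asks for an interpolant \emph{in CNF}. Since $\exists L_A.\,A$ depends only on the shared variables, it is equivalent to some CNF over those variables (for instance, one clause per falsifying assignment). I will identify the quantified formula with any such CNF representation. This representation step is the only place where care is needed, and it is routine. Everything else is semantic.

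\textbf{Part 1.} The variable condition holds by construction. For $A \Rightarrow \exists L_A.\,A$, take any assignment $\sigma$ satisfying $A$; its restriction to the shared variables, with witness $\sigma|_{L_A}$, satisfies $\exists L_A.\,A$. For unsatisfiability of $(\exists L_A.\,A)\wedge B$, suppose $\tau$ satisfies both. Then there is an assignment $\rho$ to $L_A$ such that $\tau \cup \rho$ satisfies $A$. Since $L_A$ is disjoint from $\text{Var}(B)$, modifying $\tau$ on $L_A$ to agree with $\rho$ keeps $B$ satisfied. This yields a model of $A\wedge B$, a contradiction. For minimality, let $I$ be any interpolant and $\tau \models \exists L_A.\,A$, with $\rho$ extending $\tau$ on $L_A$ so that $A$ holds. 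Because $A \Rightarrow I$, the assignment $\tau\cup\rho$ satisfies $I$. Since $\text{Var}(I)\cap L_A=\emptyset$, $\tau$ alone satisfies $I$.

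\textbf{Part 2.} This is symmetric. Variables of $\neg\exists L_B.\,B$ lie in $\text{Var}(B)\setminus L_B$, the shared set. For $A \Rightarrow \neg\exists L_B.\,B$, suppose $\sigma\models A$ but $\sigma \models \exists L_B.\,B$ via a witness $\rho$ on $L_B$. Overwriting $L_B$ does not affect $A$, so we obtain a model of $A\wedge B$, a contradiction. Next, $(\neg\exists L_B.\,B)\wedge B$ is unsatisfiable, since any model of $B$ witnesses $\exists L_B.\,B$. For maximality, suppose $\tau\models I$ but $\tau\models\exists L_B.\,B$ via $\rho$. Then $\tau\cup\rho$ satisfies $I$, because $I$ avoids $L_B$, and it also satisfies $B$. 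This contradicts the unsatisfiability of $I\wedge B$.

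Throughout, the key lemma used repeatedly is that a formula's truth value depends only on its own variables. This lets assignments be freely extended or modified on local variables.
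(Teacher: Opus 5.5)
Your proof is correct. You verify each of the three interpolant conditions and both extremality claims exactly as needed, using only the fact that a formula's truth value depends on its own variables. You also rightly flag and dispose of the point that the paper's definition requires a CNF, which $\exists L_A.\,A$ and $\neg\exists L_B.\,B$ are only up to equivalence over the shared variables.

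The paper gives no proof of its own: it cites the result as folklore from D'Silva et al. Your direct semantic argument is the standard one. The one alternative worth noting is that Part~2 also follows from Part~1 applied to $(B,A)$. Up to CNF representation, $I$ interpolates $(A,B)$ iff $\neg I$ interpolates $(B,A)$, which mirrors how the paper computes $I_w$ by negating the strongest interpolant from $B$ to $A$.
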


We refer to $\exists L_A.\, A$ as the \emph{strongest interpolant} and to
$\neg \exists L_B.\, B$ as the \emph{weakest interpolant} of $(A, B)$.
Together they bound the lattice of interpolants under entailment. This
characterization is folklore; it was used by~\cite{McMillan2003} to motivate interpolation as an approximation of image computation, and was
formalized in the context of strength orderings by~\cite{DSilva2010}.

\begin{figure*}[t]
    \centering
    \begin{subfigure}[t]{0.48\textwidth}
        \centering
        \includegraphics[width=\linewidth]{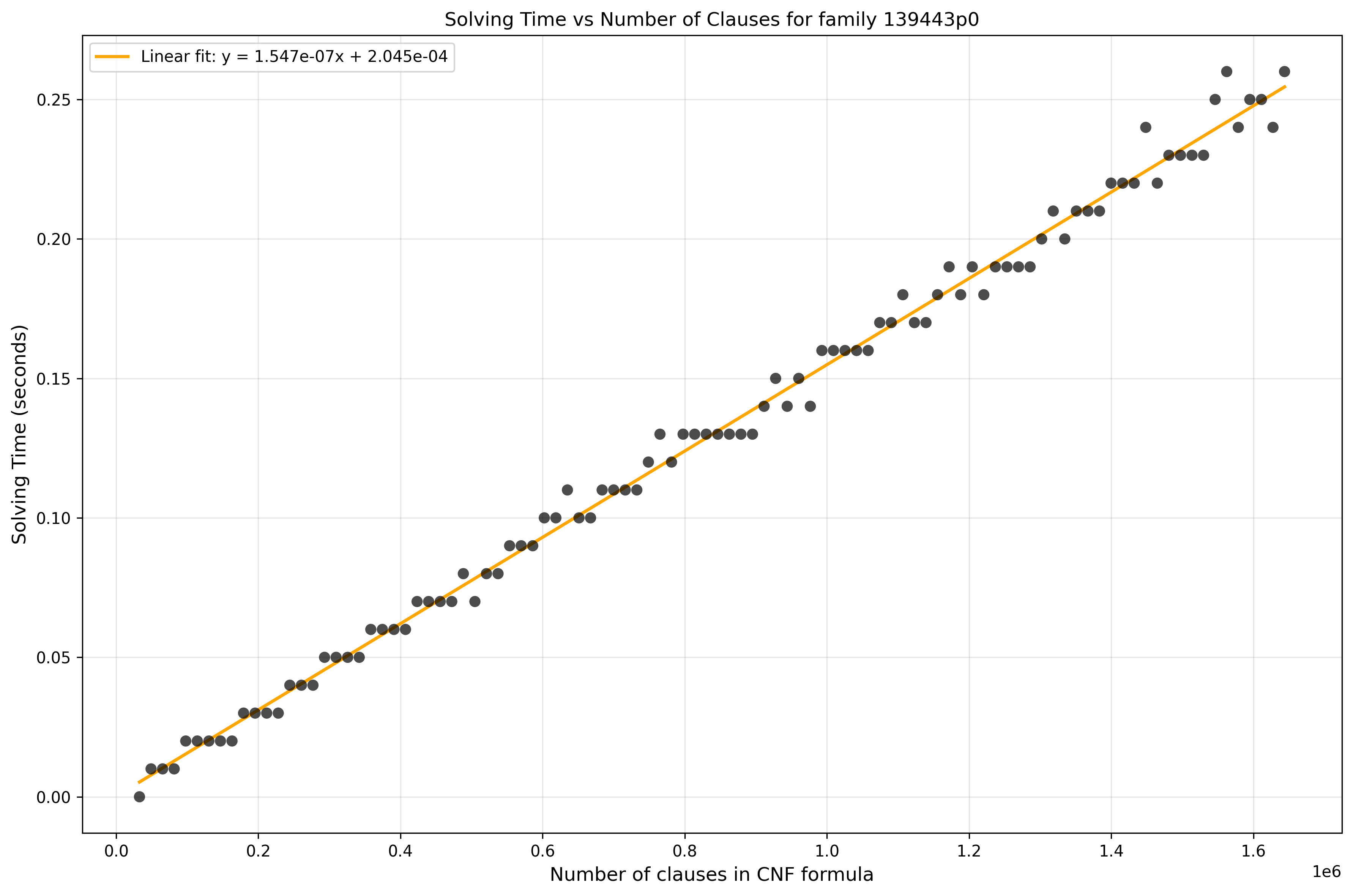}
        \caption{A linearly-scaling family of BMC instances}
        \label{fig:bmclinear}
    \end{subfigure}
    \hfill
    \begin{subfigure}[t]{0.48\textwidth}
        \centering
        \includegraphics[width=\linewidth]{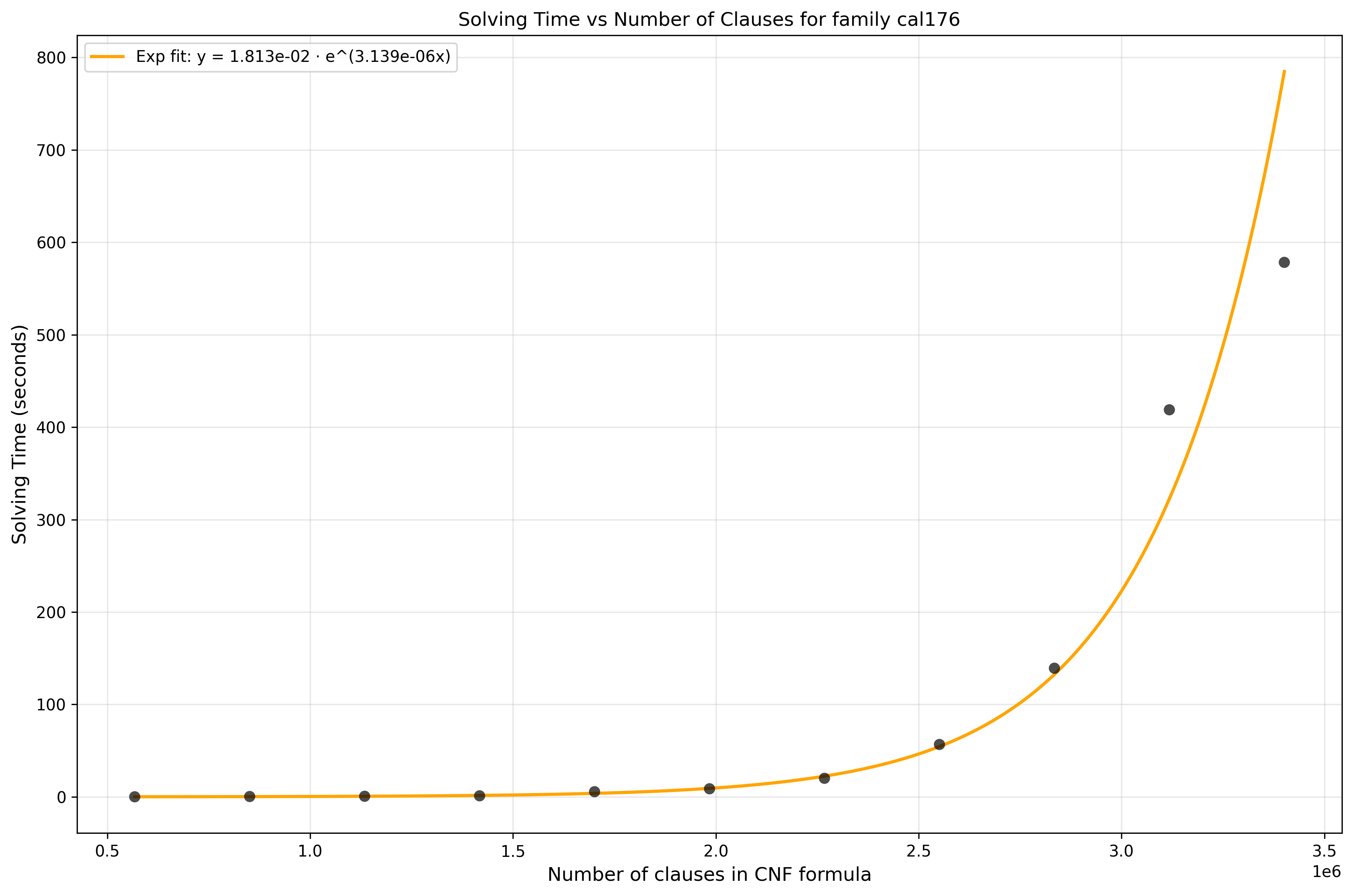}
        \caption{An exponentially-scaling family of BMC instances}
        \label{fig:bmcexponential}
    \end{subfigure}

    \caption{Solving time for two families of BMC instances}
    \label{fig:bmc-solvingtime}
\end{figure*}

\subsection{Proofdoors}
Singh et al.~\cite{singh2026} generalize a single interpolant between two partitions of an unsatisfiable formula to a sequence of interpolants along a linear ordering of partitions. Informally, given a linear ordering on a set of disjoint partitions of a formula $F$, each interpolant in a proofdoor can be thought of as a record of the analysis performed by a solver over the appropriate partition and the rest of the formula, seen by the solver during its run. The idea of proofdoor attempts to capture well-known solver behavior of locally analyzing a partition (or a small set of partitions) of a formula, deriving interpolants between said partition and the rest of the formula, and then moving on to other parts of the formula. 

\begin{definition}[\bf{Proofdoor Decomposition~\cite{singh2026}}]\label{def:proofdoordecomp}
A \emph{proofdoor decomposition} of an unsatisfiable CNF formula $F$ is an expression
\[
\begin{aligned}
F \;=\;& A_1(X_1,Z_1)\ \wedge\ A_2(X_2,Z_2)\ \wedge\ \cdots \\
& \wedge\ A_{K-1}(X_{K-1},Z_{K-1})\ \wedge\ A_K(X_K),
\end{aligned}
\]
together with a sequence of  interpolants 
$I_1(Z_1), I_2(Z_2), \dots, I_{K-1}(Z_{K-1}),$ represented in CNF  where each $A_i$ is a CNF formula over the indicated variable sets, and each $X_i$ is the set of variables in $A_i$ that do not appear in any subsequent $A_j$ with $j>i$. The interpolants satisfy that $I_1$ interpolates from $A_1$ to $A_{2} \land \dots \land A_K$, and each subsequent $I_j$ interpolates from $I_{j-1} \land A_j$ to $A_{j+1} \land \dots \land A_K$. Figure \ref{fig:proofdoor-structure} shows the structure of a proofdoor.


\end{definition}


The decomposition, by itself, is purely structural; they then introduced parameters $(c, w, s)$ to turn it into a complexity guarantee by bounding the three factors in the per-chunk cost of deriving $I_j$ from $I_{j-1} \wedge A_j$: $c$ caps the clauses per interpolant, $w$ the pathwidth of each chunk $A_j$, and $s$ characterizes how many clauses of $I_{j-1}$ each clause of $I_j$ depends on. We refer the reader to Singh et al.~\cite{singh2026} for a better understanding of these parameters.
\begin{definition}[\bf{Proofdoors with bounds~\cite{singh2026}}]\label{def:proofdoors}
Let $F$ be an unsatisfiable CNF formula over $n$ variables, and let $c,w,s \ge 1$ be integers.
We say that $F$ \emph{admits a proofdoor with parameters $(c,w,s)$}
if there exist an integer $K \ge 1$ and a proofdoor decomposition
$F = A_1 \wedge \cdots \wedge A_K$
with interpolants $(I_1,\ldots,I_{K-1})$ satisfying:

\begin{enumerate}
\item $\forall j \in \{1,\ldots,K-1\}$, the interpolant $I_j$ contains at most $c$ clauses.
\item $\forall j \in \{1,\ldots,K-1\}$ and every clause $C \in I_j$,
there exists a set $S(C) \subseteq I_{j-1}$ of size at most $s$ such that
$A_j \wedge S(C) \models C$.
\item The interpolant $I_{K-1}$ contains at most $s$ clauses.
\item $\forall j \in \{1,\ldots,K\}$, the chunk $A_j$ has clause-variable incidence pathwidth at most $w$.
\end{enumerate}
\end{definition}

\begin{definition}[\bf{Small Proofdoors~\cite{singh2026}}]\label{def:smallpfd}
An unsatisfiable CNF formula $F$ with $n$ variables
admits \emph{small proofdoors} if it has a proofdoor
with parameters $(c,w,s)$ satisfying $c = O(n),  w = O(\log n), s = O(\log n) $.
\end{definition}

\begin{definition}[\textbf{Strongest Proofdoor}]\label{def:strongest-proofdoor}
A proofdoor $(I_1, \dots, I_{K-1})$ of a decomposition
$F = A_1 \wedge \cdots \wedge A_K$ is a \emph{strongest proofdoor}
if each $I_j$ is the strongest interpolant of the pair
$\bigl(I_{j-1} \wedge A_j,\; A_{j+1} \wedge \cdots \wedge A_K\bigr)$,
i.e.,\ $I_j = \exists X_j.\,(I_{j-1} \wedge A_j)$,
where $I_0 = \top$.
\end{definition}

\subsection{Clause Absorption}
\label{sec:prelim:absorption}
The concept of clause absorption~\cite{AtseriasFichteThurley2011} (or the dual concept of 1-empowerment~\cite{pipatsrisawat2011power}) has been critical to our proof-theoretic understanding of CDCL solvers and in establishing the polynomial equivalence between CDCL proofs and general resolution.   Informally, given a Boolean formula $F$, \emph{absorption}~\cite{AtseriasFichteThurley2011} is a condition on a clause $C$ over the variables of $F$, relative to the solver's current clause database: we say the clause $C$ is absorbed when a solver's unit propagation routine derives all of its unit consequences without necessarily using the clause $C$ itself. Absorption is a special, stronger form of logical implication: every absorbed clause is implied by F, but not every implied clause is absorbed. Learned clauses are absorbed by definition; the converse need not hold.

Under BCP (Boolean Constraint Propagation), absorbed and learned clauses produce the same propagations, so proof-theoretic analyses of CDCL solvers with restarts track absorption rather than learning~\cite{AtseriasFichteThurley2011,pipatsrisawat2011power}. Absorption allows much greater flexibility while analyzing the behavior of CDCL solvers because it does not require a resolvent to be learnt exactly and stored in the solver's clause database. Rather, it only requires that the solver absorb the appropriate resolvent, i.e., it learns some (set of) clause(s) that give the same propagations as learning the resolvent would have given.


\begin{definition}[\textbf{Clause Absorption }
\label{def:clause-absorption}
\cite{AtseriasFichteThurley2011}]
A formula $F$ absorbs a clause $C$ if setting all literals of $C$ except one to false causes unit propagation on $F$ to force the remaining literal to true or derive a contradiction.
\end{definition}

\begin{definition}[\textbf{Proofdoor Absorption}]
\label{def:proofdoor-absorption}
Let $P = (I_1, I_2, \ldots, I_{K-1})$ be a proofdoor of an unsatisfiable
CNF formula $F$. We say $F$ \emph{absorbs} $P$ if, for every
$j \in \{1, \ldots, K-1\}$, $F$ absorbs every clause of $I_j$ in the
sense of Definition~\ref{def:clause-absorption}.
\end{definition}

\subsection{Why Absorption? Why Incremental?}
\label{sec:whyabs}
We use absorption as a proxy to measure whether interpolant clauses in a given proofdoor of an input formula are computed by the solver incrementally during its run. If the answer is yes, it points to the fact that the solver computes a set of clauses that give the same propagations as computing the interpolant. If the answer is no, it points to the fact that the given proofdoor is not computed incrementally by the solver. 

Observe that for an unsatisfiable formula, all clauses are eventually absorbed. Incrementality of absorption of the interpolants in a proofdoor (in the order given by the proofdoor) establishes that the solver derives some sets of clauses that give the same effect as computing these interpolants. Incremental absorption implies localized reasoning. On the other hand, if the absorption is not incremental, it means that the solver is not able to memoize relevant interpolants even after analyzing appropriate regions of a formula.

\section{BMC Scaling Studies}
\label{BMCCategorization}
This section gives the full procedure behind the scaling study. We study CDCL's behavior on BMC instances drawn from the HWMCC benchmark suite~\cite{hwmcc19}.Our
object of measurement is the SAT solver, not the model
checker, so we solve each formula from scratch rather
than incrementally; the recorded solving time reflects one
CDCL run on the full formula.
The procedure labels each family by how its solving time scales with the unrolling depth $K$: \emph{linear}, \emph{polynomial}, \emph{exponential}, or \emph{unknown}.
\begin{enumerate}
\item {\textbf{Data collection.}}
For each family we generate CNFs with \texttt{simpleCAR}, the single-process version of SAT-based BMC tool SuperCAR~\cite{supercar2024}, for $K = 1,\ldots,100$. We solve each formula with \texttt{CaDiCaL}~\cite{biere2024cadical}, with clause deletion, inprocessing and preprocessing disabled so the timings reflect pure CDCL, and record the solving time $t(K)$ and a size measure $s(K)$ (variable and clause counts). We stop increasing $K$ once the instance becomes SAT or $t(K)$ exceeds $1600$ seconds.

\item{\textbf{Monotone envelope.}}
$t(K)$ is often non-monotone; even and odd $K$ can even follow disjoint curves. To remove this oscillation before fitting, we take the running maximum, a one-sided shape restriction from order-restricted inference~\cite{BarlowEtAl1972,RobertsonWrightDykstra1988}:
\[
t^{\max}(K) \;=\; \max_{1 \leq j \leq K} t(j).
\]
$t^{\max}$ is the pointwise least nondecreasing majorant of $t$: every upward jump in $t$ is preserved, and no new jump is introduced. Because $t^{\max}$ is piecewise constant, we linearly interpolate between its jump points to obtain a piecewise-linear $\tilde t(K)$ for fitting; the interpolation adds no new jumps.
\item{\textbf{Model fitting and label.}}
On the pairs $\{(s(K),\,\tilde t(K))\}$ we fit three models: a linear fit $t \approx a\,s + b$, a polynomial fit $t \approx \sum_{i=0}^{d} a_i\, s^i$ at fixed degree $d$, and an exponential fit $t \approx \alpha \exp(\beta s)$. Each family takes the label of the highest-$R^2$ fit. Polynomial regression contains the linear model as a special case and can win on noise, so the linear $R^2$ receives a $+0.05$ bonus to favor parsimony. Families with fewer than 5 datapoints before timeout are labeled \emph{unknown}.

\item{\textbf{Visual sanity check.}}
We validate each label by plotting $\tilde t(K)$ against the selected fit. This catches misclassifications from short sequences and from timeouts that truncate the tail before the growth regime is visible.

\end{enumerate}

Of the $766$ HWMCC circuits in our benchmark, $333$ are linear, $268$ polynomial, $148$ exponential, and $17$ unknown. The labels are statistical, not strict --- they come from regression fits within the explored depth range, not formal complexity claims. The labeling exposes a within-benchmark spectrum: families generated by the same tool, drawn from the same suite, with no obvious syntactic distinction, lie at opposite scaling regimes (Figures~\ref{fig:bmclinear} and~\ref{fig:bmcexponential}). Polynomial families typically land between the linear and exponential fits.

We repeated the scaling study using a DPLL solver. This yielded 169 linear, 272 polynomial, 236 exponential, and 89 unknown families. We also repeated the study using the CDCL solver with clause deletion enabled. Compared with the results obtained with clause deletion disabled, the scaling labels remained unchanged for 730 of the 766 families.

\section{Explanation of scalability}
\label{sec:explanation}

To understand the observed scaling differences, we compare the clearest endpoints from Section IV, namely, families labeled linear and
families labeled exponential. Section~\ref{sec:noexplanation} first checks whether parameters proposed in prior work --- treewidth and (hierarchical) community structure --- discriminate the two regimes; they do not. We then turn to proofdoors. Section~\ref{sec:technical} describes how we compute proofdoors on BMC formulas, and Section~\ref{sec:experiments} addresses three research questions, the first two of which directly probe the predictions of the proofdoor
theorem~\cite{singh2026}:
\begin{itemize}
\label{EofScalabilityRQList}
  \item \textbf{RQ1}: Do solvers compute proofdoors at all? That is, do they incrementally absorb small proofdoors while solving linear formulas?
  \item \textbf{RQ2}: What is the difference between proofdoors of linear vs. exponential instances?
  \item \textbf{RQ3}: Does scrambling linear formulas prevent CDCL solvers from discovering small proofdoors that we know exist?
\end{itemize}

\subsection{Previously Proposed Parameters}
\label{sec:noexplanation}
We check three structural parameters frequently proposed to explain
CDCL efficiency on industrial benchmarks. None discriminates the
regimes of Section~\ref{BMCCategorization}, consistent with Zulkoski
et al.~\cite{zulkoski2018thesis}, who found that no single structural
measure reliably predicts solver performance across instances; see
Section~\ref{sec:related} for a broader literature survey.

\noindent{\textbf{Clause-Variable Ratio (CVR).}}
In our experiments we found that CNF formulas generated from BMC instances have similar constant CVR, irrespective of whether they scale linearly or exponentially. Hence, clearly CVR does not explain the difference.
%
\noindent{\textbf{Treewidth.}}
BMC formulas have small treewidth, and treewidth-based algorithms 
solve them in time exponential in treewidth and linear in formula 
size. They are dynamic programming rather than search, so they 
would scale linearly on all BMC families, contradicting the 
spectrum we observe. Treewidth therefore does not explain the 
linear--exponential split.

\noindent{\textbf{Community Structures.}}
 Industrial SAT formulas have high modularity, which has been proposed
to explain CDCL's efficiency on them~\cite{AnsoteguiBonetGiraldezCruLevySimon2019}. Unfortunately, in our experiments, we found that both linearly-scaling and exponentially-scaling BMC instances have high modularity. Thus, modularity cannot explain the difference.

\subsection{Computing proofdoors}
\label{sec:technical}
Proofdoors are defined via interpolants, and interpolants are not unique: for a given pair of partitions, different algorithms return different interpolants, and proofdoor size depends on the algorithm. Whether a formula admits \emph{any} small proofdoor is therefore a question about the entire interpolant lattice. We compute three interpolants per pair: the strongest $I_s$, the weakest $I_w$, and a McMillan interpolant~\cite{McMillan2003} (Section ~\ref{app:mcmillan}) $I_m$ --- the endpoints of the lattice (Proposition~\ref{prop:strongest-weakest-interpolant}) and an interior point.


\noindent{\textbf{Map from BMC formula to Proofdoor.}}
We instantiate the proofdoor decomposition on BMC formulas by aligning
the chunks or partitions with unrolling steps. Fix an unrolling depth $K$, and use
$0$ as the starting index for brevity. The BMC formula decomposes
into $K{+}1$ chunks $A_0,\ldots,A_K$:
\begin{align*}
A_0 &= \textit{Initial} \wedge T_0 \wedge \neg\textit{bad}_0, \\
A_i &= \neg\textit{bad}_i \wedge T_i 
      && \text{for } 0 < i < K, \\
A_K &= \textit{bad}_K \wedge T_K.
\end{align*}
Here $\textit{Initial}$ encodes the initial state at step $0$,
$T_i$ is the transition relation constraining steps $i$ and $i{+}1$,
and $\textit{bad}_i$ is the property-violation predicate at step $i$.
The conjunction $A_0 \wedge \cdots \wedge A_K$ is satisfiable iff
the system admits a $K$-step trace whose first property violation
occurs at step $K$.

\begin{figure*}[t]
  \centering

  \begin{minipage}[t]{0.48\textwidth}
    \centering

    \includegraphics[width=\linewidth]{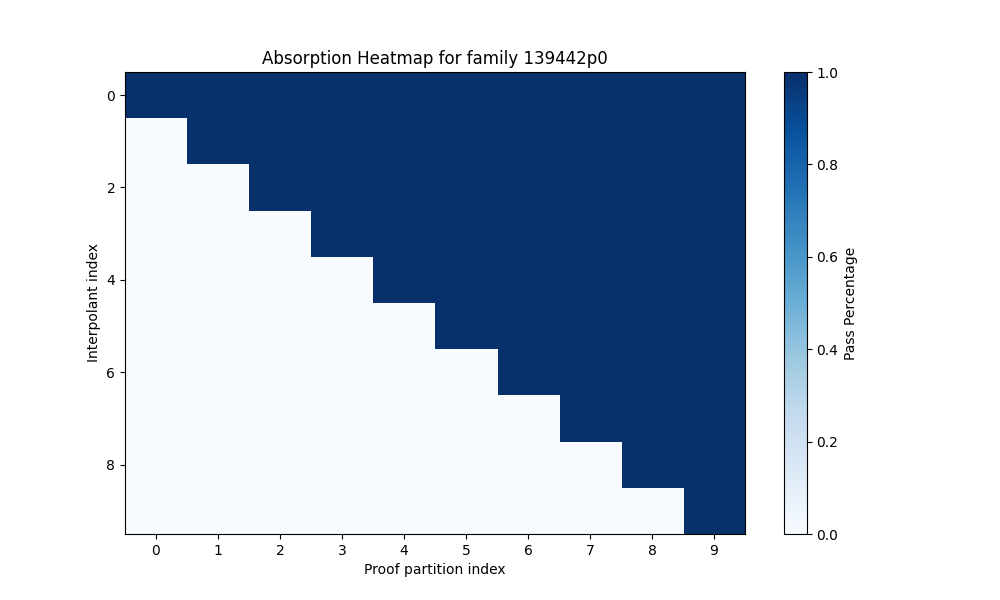}

    \captionof{figure}{Absorption Heatmap for Linear Instance}
    \label{fig:heatmap_linear}

  \end{minipage}
  \hfill
  \begin{minipage}[t]{0.48\textwidth}
    \centering

    \includegraphics[width=\linewidth]{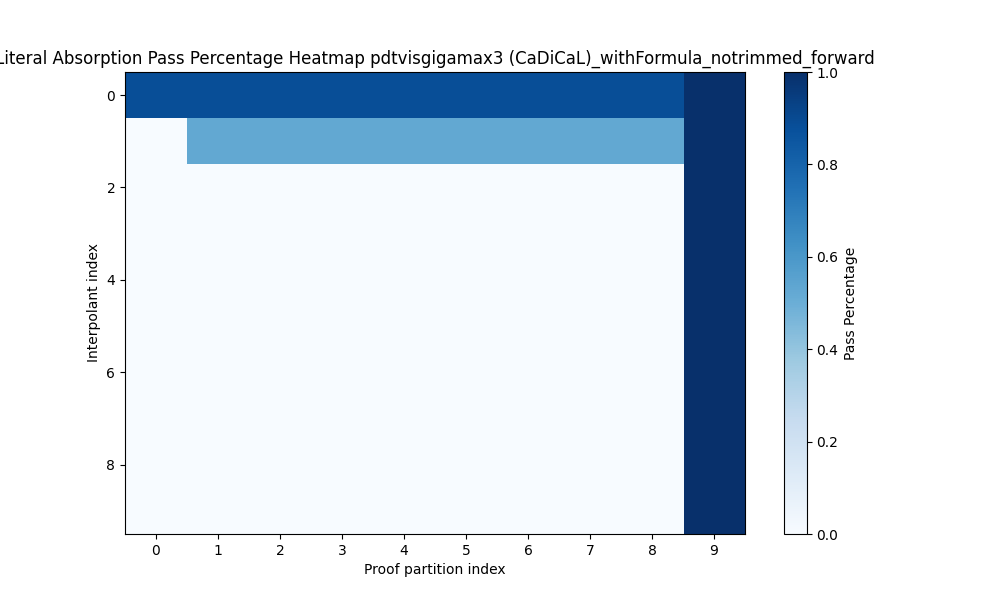}

    \captionof{figure}{Absorption Heatmap for Exponential Instance}
    \label{fig:heatmap_exponential}

  \end{minipage}

\end{figure*}

\subsubsection{Bounded Variable Elimination based Strongest Proofdoor Computation (BVESPC)}
\label{sec:pdc:strongest}
The strongest interpolant $I_s = \exists L_A.\,A$ is the projection of $A$ onto $V_s$, obtained by eliminating every variable in $L_A$ by resolution.
Given an unsatisfiable BMC CNF formula $F$, we compute the strongest proofdoor in the following steps:
\begin{enumerate}
  \item \textbf{AIG-to-CNF.} Generate a CNF encoding from the underlying BMC instance of unfolding depth K.
  \item \textbf{Partition.} Partition the clause set of $F$ into disjoint, non-empty blocks $A_0,\ldots,A_K$ aligned with the BMC unrolling structure. For each interpolation problem $(A,B)$ with $A = A_0 \wedge \cdots \wedge A_i$ and $B = A_{i+1} \wedge \cdots \wedge A_K$, let $V_s = \text{Var}(A) \cap \text{Var}(B)$ and $L_A = \text{Var}(A) \setminus V_s$.
  \item \textbf{QDIMACS encoding.} Encode $A$ as a QDIMACS formula with each $L_A$ variable quantified existentially, expressing $\exists L_A.\,A$ as a quantified CNF.
\item \textbf{Bounded variable elimination.} Run Parakram Majumdar's BVE Tool~\cite{majumdar_factorgraph},
which eliminates an $L_A$ variable whenever its resolvent set fits under
a size threshold. Relying on forced elimination alone is prohibitively
expensive in time and memory, so we use BVE to speed up the elimination;
BVE does not guarantee that all $L_A$ variables are eliminated, so a
follow-up forced-elimination step is needed.
  \footnote{We also experimented with Boolean function synthesis tools (e.g., Manthan ~\cite{GoliaRoyMeel2020Manthan}, BFSS\cite{AkshayChakrabortyGoelKulalShah2018BFSS}) to derive interpolants from Skolem functions without Tseitin transformation, but found this approach less efficient than BVE-based elimination in our pipeline.}
  \item \textbf{Davis--Putnam closure.} For every $L_A$ variable BVE leaves behind, force elimination by the full Cartesian product of its positive and negative occurrences --- classical Davis--Putnam resolution --- ignoring the BVE thresholds. The output is a CNF over $V_s$ equivalent to $\exists L_A.\,A$. 
  \item \textbf{Validity check.} Verify that each $I_s$ satisfies the interpolation conditions $A \models I_s$, $I_s \wedge B \models \bot$, and $\text{Var}(I_s) \subseteq V_s$.
\end{enumerate}

With BVESPC, we computed strongest proofdoors with $K\ge 3$ for 60 linear families and for 28 exponential families. Most of the computation timed out or ran out of memory because variable elimination is \#P-hard. We compute weakest proofdoors dually: for each cut $(A,B)$, we compute the strongest interpolant from $B$ to $A$ by eliminating variables local to $B$, and then negate the result, yielding $I_w = \neg \exists L_B . B$. In practice, this dual computation succeeded on even fewer instances, as the negation step is followed by CNF conversion via De Morgan's laws, which can cause an additional blow-up.

\subsubsection{iZ3 Proofdoor Computation and Conversion Pipeline (IPCCP)}
\label{sec:pdc}

Given an unsatisfiable BMC CNF formula $F$, IPCCP performs the following steps:
\begin{enumerate}
  \item \textbf{AIG-to-CNF.} Generate a CNF encoding from the underlying BMC instance.
  \item \textbf{Partition.} Partition the clause set of $F$ into disjoint, non-empty blocks $A_0,\ldots,A_K$ aligned with the BMC unrolling structure.
  \item \textbf{Interpolant computation.} Rewrite the formula partitions into QF\_Bool SMT formulas and compute McMillan interpolants 
  using \texttt{iZ3}, the interpolation engine bundled with Z3~4.7.1  \cite{demoura2008z3}.
  
  \item \textbf{CNF conversion and sizing.} Convert interpolants to CNF and measure sizes as the total number of clauses across all interpolants. We use a De Morgan/NNF-based conversion rather than Tseitin transformation to avoid introducing fresh variables.
  \item \textbf{Sanity checks.} Verify interpolant conditions for each computed interpolant. With IPCCP we computed proofdoors with $ K \ge 5$ for 242 linear families and for 71 exponential families.
\end{enumerate}

\subsection{Experiments}
\label{sec:experiments}

Our research question is whether the proofdoor parameter
discriminates the linear and exponential endpoints of the BMC scaling spectrum identified in Section~\ref{BMCCategorization}. This leads to three research questions as listed in Section \ref{EofScalabilityRQList}:
\subsubsection*{\textbf{RQ1: On linear families, does CDCL incrementally absorb a small proofdoor?}}
\label{RQ1}
We do not directly check whether proofdoor clauses appear as-is in the CDCL solver's proof because the solver very likely computes different conflict clauses. Instead, we check whether or not proofdoor clauses are {\it absorbed} by the proof. 
As explained previously in \ref{sec:whyabs}, absorption is a good proxy for computing proofdoor clauses exactly. 

We first partition the learned clauses in the CDCL proof according to the order in which they are learned, obtaining a sequence of partial proofs, each consisting of the original formula together with the learned clauses accumulated up to that point. For each $j$, we then check whether the corresponding partial proof absorbs $I_j$.
Incremental absorption along the whole sequence is empirical
evidence that the solver is implicitly computing the proofdoor
sequence step by step, in sync with the BMC unrolling.
By the absorption--learning equivalence reviewed in
Section~\ref{sec:prelim:absorption}, an absorbed clause is
indistinguishable from a learned one as far as unit propagation is concerned, and
proof-size analyses of CDCL with restarts track absorption rather
than literal learning~\cite{AtseriasFichteThurley2011,PipatsrisawatDarwiche}.


We run this check with the strongest proofdoor, since $I_s$ is the strongest point of the interpolant lattice and hence the most demanding absorption target.

\noindent{\textbf{Proof partitioning}}
Given a DRAT proof produced by the solver, we partition the added clauses
into $K$ DRAT added clause sets aligned with the BMC unrolling depth. We assign
each variable $x$ to the chunk $A_i$ in which its last occurrence appears,
and write $\mathrm{chunk}(x) = i$. A clause $C$ is then assigned to chunk
$A_i$ where $i$ is the maximum chunk index among its variables:
\[
  \mathrm{chunk}(C) = \max_{x \,\in\, \text{Var}(C)} \mathrm{chunk}(x).
\]
For each $i\in[0,K-1]$, let $\Pi_i$ denote the prefix of the
DRAT-added clause sequence ending immediately before the first added
clause $C$ assigned to a higher chunk, i.e.,
$\operatorname{chunk}(C)>i$. By construction, every clause in $\Pi_i$
contains only variables drawn from $A_0,A_1,\ldots,A_{i+1}$.
Here, $\operatorname{chunk}(C)$ is a structural label determined only
by the variables of $C$, whereas $\Pi_i$ is determined by the
derivation order of the DRAT proof.

\noindent{\textbf{Absorption heatmap}}
To visualize the result, we draw absorption heatmaps for each formula: let $I_j$ be the $j$-th interpolant in the proofdoor sequence and let $\Pi_i$ denote the $i$-th partial proof.
For each pair $(i,j)$, we compute the fraction of clauses in $I_j$ that are absorbed by the partial proof induced by $\Pi_i$ (under the absorption definition of Section~\ref{sec:prelim:absorption}).
We aggregate these fractions into a matrix $H \in [0,1]^{K \times K}$, where $H[i,j]$ is the absorption fraction of $I_j$ under $\Pi_i$, and visualize $H$ as a heatmap. The block is darker in color with higher $H[i,j]$.

\noindent{\textbf{Environment}}
This experiment was run on a machine with two AMD EPYC 9655 (Zen 5) CPUs at 2.7\,GHz and a 20\,GB memory limit per job. For the given benchmark set, proof partitions and strongest proofdoors, we perform the following steps for each instance:
\begin{enumerate}
\item \textbf{Acquire Proof:} We solve each instance with CaDiCaL to obtain
a DRAT proof $\Pi$. To check that the observed absorption patterns are
not CaDiCaL-specific, we repeat the proof generation and absorption
analysis using Glucose. In both runs, inprocessing, preprocessing, and
clause deletion are disabled to isolate the core CDCL behavior.
\item \textbf{Check Incremental Absorption:} We partition $\Pi$ according to the unrolling depth $K$ and measure the absorption of the interpolants against these partial proofs, explicitly testing whether proofdoors are incrementally computed by the solver.
\item \textbf{Plot Heatmap:} We aggregate the absorption statistics into heatmaps.
\end{enumerate}

\noindent{\textbf{Incremental absorption on linear families}}
For the iZ3 proofdoors, among 242 linear families, 102 exhibited a perfect upper-triangular absorption pattern, 31 were non-perfect, and 109 timed out; among 71 exponential families, the corresponding counts were 7, 11, and 53. For the BVESPC proofdoors, among 60 linear families, 41 were perfect, 7 were non-perfect, and 12 timed out; among 28 exponential families, 3 were perfect, 1 was non-perfect, and 24 timed out. Non-perfect absorption was frequently observed in families that timed out at larger depths. Thus, as illustrated in Figures~\ref{fig:heatmap_linear} and~\ref{fig:heatmap_exponential}, perfect incremental absorption is substantially more prevalent among linear families.

Incremental absorption alone does not yield a $\mathrm{poly}(n)$-size proof; the three smallness conditions of Definition~\ref{def:smallpfd}---bounded interpolant width $c$, bounded chunk pathwidth $w$, and bounded cross-interpolant dependency $s$---must also hold. We measure each in turn.

\begin{figure}[t]
  \centering
  \begin{minipage}[t]{0.50\textwidth}
    \centering
    \includegraphics[width=\linewidth]{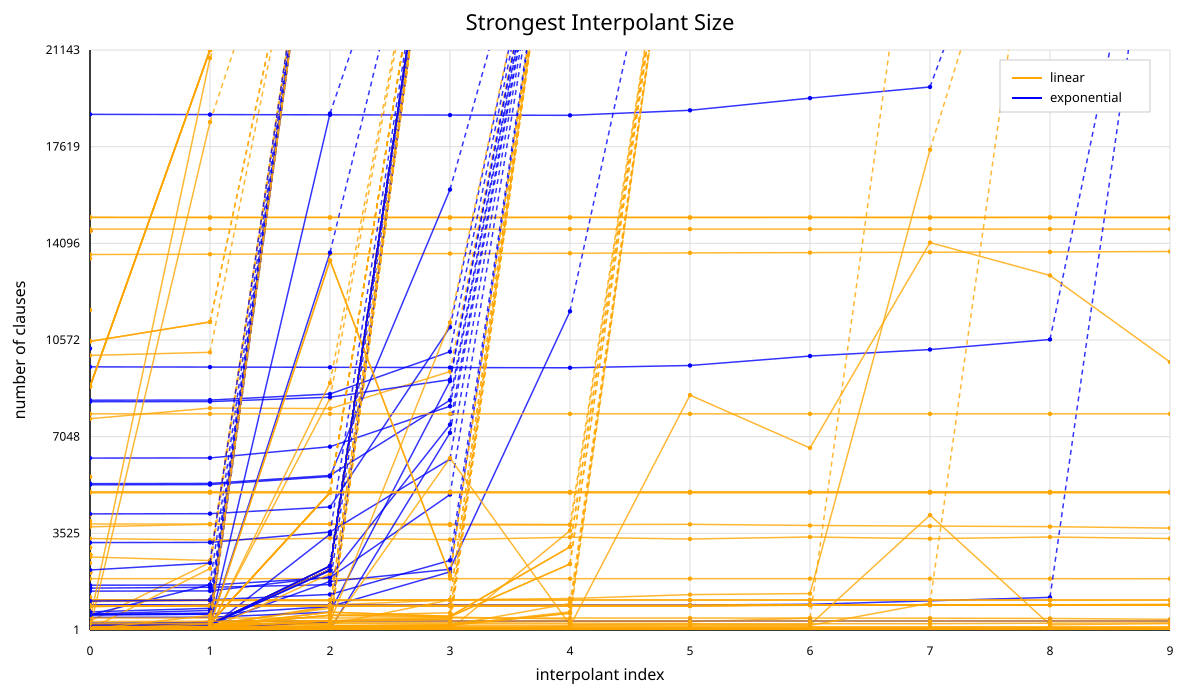}
    \captionof{figure}{Sizes of strongest interpolants. Yellow lines denote linear instances, blue lines exponential instances; dashed lines are size estimates. The interpolant size stays usually bounded along the sequence, while on most exponential families it grows exponentially.}
    \label{fig:smallpd:c}
  \end{minipage}
\end{figure}

\noindent{\textbf{Constant interpolant size ($c$).}}
Across many linear families we tested, the  
size remains bounded by a constant. This is stronger than
the $c = O(n)$ requirement.

\noindent{\textbf{Constant per-chunk pathwidth ($w$).}}
We found that the clause-variable incidence
pathwidth of each BMC chunk  is bounded independently of the unrolling depth $K$.
Since $K$ grows linearly in formula size,
per-chunk pathwidth is constant in formula size---stronger
than the $w = O(\log n)$ requirement.

\noindent{\textbf{Constant cross-interpolant dependency ($s$).}}
Since $|I_j| \leq c$ is bounded by a constant
across our linear families, $s$ is trivially bounded by $c$ ---stronger than the $O(\log n)$ requirement.


\noindent{\textbf{Result.}}
CDCL usually incrementally absorbs the strongest proofdoor 
on linear formulas for which both proofdoor and absorption results were obtained
(Figure~\ref{fig:heatmap_linear}), and that proofdoor satisfies all three smallness conditions of Definition~\ref{def:smallpfd}. These findings and the observed linear solving-time scaling reported in Section~\ref{BMCCategorization} align with the Small Proofdoor Theorem~\cite{singh2026} that yields a $\mathrm{poly}(n)$-size resolution refutation if the formula admits a small proofdoor. Further, note that our empirical result also shows that for linear instances, CaDiCaL can {\it find} the short proof in linear time. This observation is also consistent with the CDCL-Proofdoor theorem~\cite{singh2026}, which uses both small proofdoors and incremental absorption to show that CDCL can find small proofs. The caveat, however, is that their theorem uses an idealized CDCL model with non-deterministic branching and value selection etc., while we are using a real-world solver in our experiments. 


\subsubsection*{\textbf{RQ2: What is the difference between linear and exponential instances in terms of proofdoors?}}
\label{RQ2}

Rigorously verifying that no proofdoor is small would require enumerating each partition's
interpolant lattice and is thus infeasible; we instead sample three lattice points per cut and test whether CDCL incrementally absorbs them:
the strongest $I_s$, the weakest $I_w$, and a McMillan interpolant
$I_m$ (Section~\ref{sec:technical}). On most exponential families we
tested, proofdoors have exponentially growing sizes or are not incrementally absorbed. 

The three sampled proofdoors need not scale uniformly within a
family; the proofdoor hypothesis is existential, requiring one small,
absorbed proofdoor rather than all three sampled proofdoors being
small.

\noindent{\textbf{Result.}}
On exponential instances, the three lattice samples we computed are
all exponential in size. Most strongest and iZ3 proofdoors, and some weakest proofdoors, are not incrementally absorbed.
We do not claim all possible proofdoors to not be small proofdoors, but the three points are nonetheless representative. Especially, $I_w$ sits at the bottom of the entailment order---the easiest
absorption target. Although unit-propagation incompleteness prevents
$F \not\vdash_{\mathrm{UP}} I_w$ from formally implying
$F \not\vdash_{\mathrm{UP}} I$ for every $I$, failure on the easiest
target is informal evidence that few interpolants in the lattice are
incrementally absorbed. 

This is the discriminating signal we set out to find. CVR, treewidth,
and (hierarchical) community structure take similar values on linear
and exponential families (Section~\ref{sec:noexplanation}).
Proofdoor separates them: on linear families $I_s$ alone witnesses
the theorem's hypothesis; on exponential families none of
the three representative points does. We do not claim this establishes the absence of a small proofdoor on exponential instances, but among
the parameters we examined, proofdoor is the only one whose empirical
behavior tracks the linear--exponential split predicted by its theorem.

\subsubsection*{\textbf{RQ3: Perturbation: does scrambling disrupt the small proofdoor?}}
\label{RQ3}
In Section 6 of their paper~\cite{singh2026}, Singh et al. prove lower bounds for
partially ordered resolution proofs. The corresponding intuition in the solver
setting is that if a solver chooses a `poor variable order', then it ends up
computing a larger proofdoor relative to choosing a `good variable order' and the
concomitant small proofdoor. We attempt to empirically check this connection
between variable order, proofdoor size, and solver performance by scrambling
linear instances such that the default branching policy of CaDiCaL ends up
constructing a poor variable order resulting in larger proofdoors and poor solver
performance.

\noindent{\textbf{Scramble method.}}
Given a linear BMC instance with unrolling chunks $A_0, \ldots, A_K$, we use
Scranfilize~\cite{BiereHeule2019Scrambling} to completely shuffle the clause
order and literal order, with a $0.01$ probability of reversing the polarities.

\noindent{\textbf{Findings.}}
CDCL solving time increases by 738\% on average. Of the scrambled
linear families, 184 yielded proofdoors for at least three folding
depths. Within this subset, 27.7\% retain linearly scaling proofdoors,
27.7\% become exponential, and the remaining 44.5\% admit neither fit
at $R^2 \geq 0.9$.

\noindent{\textbf{Limitation.}}
We probe only the McMillan point $I_m$ of the interpolant lattice on scrambled
instances; iZ3 timeouts are interpreted as $I_m$ being too large, but may also
reflect solver-side limits. We do not probe $I_s$ or $I_w$ on scrambled
instances. 

\noindent{\textbf{Result.}}
Scrambling a linear BMC instance increases CDCL solving time. A substantial fraction of families leave the linear regime, in solving time and, where a proofdoor could be
computed, in proofdoor size. This is consistent with the idea that
scrambling causes the solver's branching heuristic to choose a poorer variable
order obscuring the small proofdoor in linear instances.

\subsection{Parity-Bifurcated Families}
\label{sec:parity} 
\begin{figure}
    \centering
    \includegraphics[width=1\linewidth]{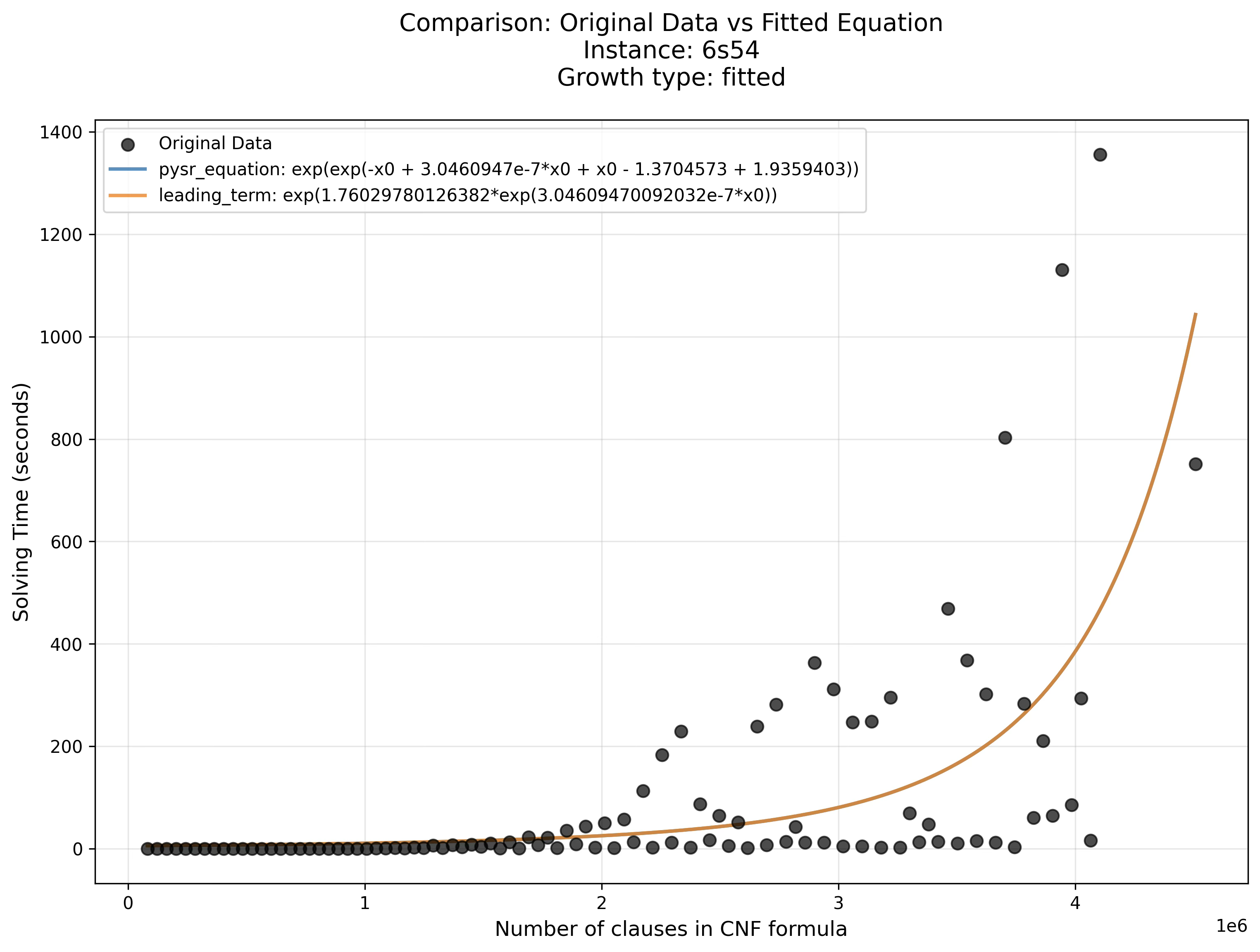}
    \caption{A family of BMC formulas where scalability is different for odd and even unfolding depths}
    \label{fig:bifurcated}
\end{figure}
Surprisingly, some families exhibit parity-dependent scaling: solving time on odd K and even K subsequences fits different regimes within the same family. Figure \ref{fig:bifurcated} shows one such family, 6s54, that scales exponentially on odd K but linearly on even K. Proofdoor size tracks the same parity split. Since family identity and BMC encoding are held constant, this is a natural in-family control: the proofdoor signal moves together with solving time when only K's parity changes. We leave a structural explanation of this phenomenon to future work.

\subsection{Limitations.} Singh et al. show that small proofdoor computation is NP-hard and our 
pipeline does time out on many of the exponentially-scaling instances. 
One might further object that the time to compute a proofdoor 
correlates with the time to solve the formula, but this does not 
threaten our conclusions: proofdoor size plausibly drives both, so 
the correlation is downstream of the parameter we study rather than 
evidence against it. We did not study polynomial families, limiting our research to linear and exponentially-scaling instances.

\section{Conclusions}
Our scaling study over 766 BMC families exposes a spectrum of CDCL 
behavior---linear, polynomial, and exponential---within a single 
benchmark suite. Classical structural parameters fail to discriminate its endpoints; the proofdoor 
parameter does, with CDCL incrementally absorbing small proofdoors on 
linear families while typically failing to do so on exponential ones. Scrambling linear instances enlarges their proofdoors and slows the solver, further linking proofdoor size to CDCL performance. To our knowledge, proofdoor is the first parameter that both admits a polynomial 
proof-size theorem and tracks CDCL's behavior on industrial formulas. 

In the future, we plan to leverage the conclusions from this paper to design a BMC-specialized CDCL solver whose branching heuristics prioritize interface variables between iterations. 

\cleardoublepage
\printbibliography

@inproceedings{copty2001benefits,
  title={Benefits of bounded model checking at an industrial setting},
  author={Copty, Fady and Fix, Limor and Fraer, Ranan and Giunchiglia, Enrico and Kamhi, Gila and Tacchella, Armando and Vardi, Moshe Y},
  booktitle={International Conference on Computer Aided Verification},
  pages={436--453},
  year={2001},
  organization={Springer}
}

@inproceedings{dalmau2002constraint,
  title={Constraint satisfaction, bounded treewidth, and finite-variable logics},
  author={Dalmau, V{\'\i}ctor and Kolaitis, Phokion G and Vardi, Moshe Y},
  booktitle={International Conference on Principles and Practice of Constraint Programming},
  pages={310--326},
  year={2002},
  organization={Springer}
}

@inproceedings{McMillan2003,
  author    = {Kenneth L. McMillan},
  editor    = {Warren A. Hunt Jr. and Fabio Somenzi},
  title     = {Interpolation and {SAT}-Based Model Checking},
  booktitle = {Computer Aided Verification, 15th International Conference,
               {CAV} 2003, Boulder, CO, USA, July 8-12, 2003, Proceedings},
  series    = {Lecture Notes in Computer Science},
  volume    = {2725},
  pages     = {1--13},
  publisher = {Springer},
  year      = {2003},
   
}

@inproceedings{newsham2014impact,
  author    = {Zack Newsham and Vijay Ganesh and Sebastian Fischmeister 
               and Gilles Audemard and Laurent Simon},
  title     = {Impact of Community Structure on {SAT} Solver Performance},
  booktitle = {Theory and Applications of Satisfiability Testing ({SAT})},
  editor    = {Carsten Sinz and Uwe Egly},
  series    = {Lecture Notes in Computer Science},
  volume    = {8561},
  pages     = {252--268},
  publisher = {Springer},
  year      = {2014},
   
}

@inproceedings{jimmyvsids2015,
  title={Understanding VSIDS Branching Heuristics in Conflict-Driven Clause-Learning SAT Solvers},
  author={Liang, Jia Hui and Ganesh, Vijay and Zulkoski, Ed and Zaman, Atulan and Czarnecki, Krzysztof},
  booktitle={Hardware and Software: Verification and Testing (HVC 2015)},
  series={Lecture Notes in Computer Science},
  volume={9434},
  pages={225--241},
  year={2015},
  publisher={Springer},
   
}

@inproceedings{zulkoski2018structural,
  author    = {Edward Zulkoski and Ruben Martins and Christoph M. Wintersteiger 
               and Jia Hui Liang and Krzysztof Czarnecki and Vijay Ganesh},
  title     = {The Effect of Structural Measures and Merges on 
               {SAT} Solver Performance},
  booktitle = {Principles and Practice of Constraint Programming ({CP})},
  series    = {Lecture Notes in Computer Science},
  volume    = {11008},
  pages     = {436--452},
  publisher = {Springer},
  year      = {2018},
   
}

@misc{hwmcc19,
  author       = {Armin Biere and Nils Froleyks and Mathias Preiner},
  title        = {Hardware Model Checking Competition 2019 ({HWMCC'19})},
  year         = {2019},
  howpublished = {\url{http://fmv.jku.at/hwmcc19/}},
  note         = {Accessed: 2026-05-08}
}

@inproceedings{AkshayChakrabortyGoelKulalShah2018BFSS,
  author    = {Akshay, S. and Chakraborty, Supratik and Goel, Shubham and Kulal, Sumith and Shah, Shetal},
  title     = {What's Hard About Boolean Functional Synthesis?},
  booktitle = {Computer Aided Verification},
  series    = {Lecture Notes in Computer Science},
  volume    = {10981},
  pages     = {251--269},
  publisher = {Springer},
  address   = {Cham},
  year      = {2018},
   
}

@inproceedings{GoliaRoyMeel2020Manthan,
  author    = {Golia, Priyanka and Roy, Subhajit and Meel, Kuldeep S.},
  title     = {Manthan: A Data-Driven Approach for Boolean Function Synthesis},
  booktitle = {Proceedings of the International Conference on Computer-Aided Verification (CAV)},
  year      = {2020},
  month     = {July},
   
  eprint    = {2005.06922},
  archivePrefix = {arXiv},
  primaryClass  = {cs.AI}
}

@phdthesis{zulkoski2018thesis,
  author = {Edward Zulkoski},
  title  = {Understanding and Enhancing {CDCL}-based {SAT} Solvers},
  school = {University of Waterloo},
  year   = {2018},
  url    = {https://uwspace.uwaterloo.ca/handle/10012/13525}
}

@book{satcomp2024,
  editor    = {Marijn J. H. Heule and Markus Iser and 
               Matti J{\"a}rvisalo and Martin Suda},
  title     = {Proceedings of {SAT} Competition 2024: 
               Solver, Benchmark and Proof Checker Descriptions},
  series    = {Department of Computer Science Report Series B},
  volume    = {B-2024-1},
  publisher = {Department of Computer Science, University of Helsinki},
  year      = {2024}
}

@misc{majumdar_factorgraph,
  author       = {Parakram Majumdar},
  title        = {{FactorGraph}: Factor graph algorithm for existential
                  quantification on boolean formulae},
  year         = {2024},
  howpublished = {\url{https://github.com/appu226/FactorGraph}},
  note         = {Accessed: 2026-05-08}
}

@inproceedings{DSilva2010,
  author    = {Vijay D'Silva and Daniel Kroening and Mitra Purandare and Georg Weissenbacher},
  editor    = {Gilles Barthe and Manuel V. Hermenegildo},
  title     = {Interpolant Strength},
  booktitle = {Verification, Model Checking, and Abstract Interpretation,
               11th International Conference, {VMCAI} 2010, Madrid, Spain,
               January 17-19, 2010. Proceedings},
  series    = {Lecture Notes in Computer Science},
  volume    = {5944},
  pages     = {129--145},
  publisher = {Springer},
  year      = {2010},
   
}

@inproceedings{WilliamsGomesSelman2003,
  author    = {Ryan Williams and Carla P. Gomes and Bart Selman},
  title     = {Backdoors To Typical Case Complexity},
  booktitle = {International Joint Conference on Artificial Intelligence (IJCAI)},
  year      = {2003},
  pages     = {1173--1178}
}

@inproceedings{GaspersSzeider2013,
  author    = {Serge Gaspers and Stefan Szeider},
  title     = {Strong Backdoors to Bounded Treewidth SAT},
  booktitle = {IEEE Symposium on Foundations of Computer Science (FOCS)},
  year      = {2013},
  pages     = {489--498},
   
}

@article{AtseriasFichteThurley2011,
  author  = {Albert Atserias and Johannes Klaus Fichte and Marc Thurley},
  title   = {Clause-Learning Algorithms with Many Restarts and Bounded-Width Resolution},
  journal = {Journal of Artificial Intelligence Research},
  volume  = {40},
  pages   = {353--373},
  year    = {2011},
   
}

@inproceedings{AnsoteguiGiraldezCruLevy2012,
  author    = {Carlos Ans{\'o}tegui and Jes{\'u}s Gir{\'a}ldez{-}Cru and Jordi Levy},
  title     = {The Community Structure of SAT Formulas},
  booktitle = {Theory and Applications of Satisfiability Testing (SAT)},
  year      = {2012},
  pages     = {410--423},
   
}

@inproceedings{AnsoteguiGiraldezCruLevySimon2015,
  author    = {Carlos Ans{\'o}tegui and Jes{\'u}s Gir{\'a}ldez{-}Cru and Jordi Levy and Laurent Simon},
  title     = {Using Community Structure to Detect Relevant Learnt Clauses},
  booktitle = {Theory and Applications of Satisfiability Testing (SAT)},
  year      = {2015},
  pages     = {238--254},
   
}

@inproceedings{audemard2009,
  author    = {Audemard, Gilles and Simon, Laurent},
  title     = {Predicting Learnt Clauses Quality in Modern {SAT} Solvers},
  booktitle = {Proceedings of the 21st International Joint Conference on Artificial Intelligence (IJCAI)},
  pages     = {399--404},
  year      = {2009}
}

@misc{supercar2024,
  author       = {Dong, Yibo and Xia, Yechuan and Zhu, Hongtai and Li, Jianwen and Pu, Geguang},
  title        = {{SuperCAR}: A Hardware Model Checker Based on {CAR}},
  howpublished = {\url{https://github.com/lijwen2748/hwmcc24}},
  year         = {2024},
  note         = {Submission to the Hardware Model Checking Competition 2024 (HWMCC'24), 3rd place in the bit-level track. Software Engineering Institute, East China Normal University}
}

@article{pipatsrisawat2011power,
  author  = {Pipatsrisawat, Knot and Darwiche, Adnan},
  title   = {On the Power of Clause-Learning {SAT} Solvers as Resolution Engines},
  journal = {Artificial Intelligence},
  volume  = {175},
  number  = {2},
  pages   = {512--525},
  year    = {2011},
}

@inproceedings{biere2024cadical,
  author    = {Biere, Armin and Faller, Tobias and Fazekas, Katalin and Fleury, Mathias and Froleyks, Nils and Pollitt, Florian},
  title     = {{CaDiCaL} 2.0},
  booktitle = {Computer Aided Verification (CAV 2024)},
  series    = {Lecture Notes in Computer Science},
  volume    = {14681},
  pages     = {133--152},
  publisher = {Springer},
  year      = {2024},
   
}

@article{AnsoteguiBonetGiraldezCruLevySimon2019,
  author  = {Carlos Ans{\'o}tegui and Mar{\'\i}a Luisa Bonet and Jes{\'u}s Gir{\'a}ldez{-}Cru and Jordi Levy and Laurent Simon},
  title   = {Community Structure in Industrial SAT Instances},
  journal = {Journal of Artificial Intelligence Research},
  volume  = {66},
  pages   = {443--472},
  year    = {2019},
   
}

@book{BarlowEtAl1972,
  author    = {Barlow, Richard E. and Bartholomew, David J. and Bremner, J. M. and Brunk, H. D.},
  title     = {Statistical Inference under Order Restrictions: The Theory and Application of Isotonic Regression},
  year      = {1972},
  publisher = {John Wiley \& Sons},
  address   = {London and New York},
  isbn      = {9780471049708}
}

@book{RobertsonWrightDykstra1988,
  author    = {Robertson, Tim and Wright, F. T. and Dykstra, Richard L.},
  title     = {Order Restricted Statistical Inference},
  year      = {1988},
  publisher = {John Wiley \& Sons},
  address   = {Chichester},
  isbn      = {9780471917878}
}

@inproceedings{mitchell1992hard,
  author    = {David Mitchell and Bart Selman and Hector Levesque},
  title     = {Hard and Easy Distributions of {SAT} Problems},
  booktitle = {Proceedings of the 10th National Conference on Artificial Intelligence (AAAI'92)},
  pages     = {459--465},
  year      = {1992}
}

@inproceedings{chew2024parity,
  author    = {Leroy Chew and Alexis de Colnet and Friedrich Slivovsky and Stefan Szeider},
  title     = {Hardness of Random Reordered Encodings of Parity for Resolution and {CDCL}},
  booktitle = {Proceedings of the 38th AAAI Conference on Artificial Intelligence (AAAI'24)},
  year      = {2024},
   
}

@inproceedings{cheeseman1991where,
  title     = {Where the Really Hard Problems Are},
  author    = {Cheeseman, Peter and Kanefsky, Bob and Taylor, William M.},
  booktitle = {Proceedings of the 12th International Joint Conference on Artificial Intelligence (IJCAI'91)},
  volume    = {1},
  pages     = {331--337},
  year      = {1991},
  publisher = {Morgan Kaufmann},
  address   = {Sydney, Australia},
  url       = {https://www.ijcai.org/Proceedings/91-1/Papers/052.pdf}
}

@inproceedings{BiereHeule2019Scrambling,
  author    = {Armin Biere and Marijn J. H. Heule},
  title     = {The Effect of Scrambling {CNF}s},
  booktitle = {Proceedings of the 9th Workshop on Pragmatics of {SAT} ({POS} 2015 and 2018)},
  series    = {EPiC Series in Computing},
  volume    = {59},
  pages     = {111--126},
  publisher = {EasyChair},
  year      = {2019}
}

@inproceedings{PipatsrisawatDarwiche,
  author    = {Knot Pipatsrisawat and Adnan Darwiche},
  title     = {On the Power of Clause-Learning {SAT} Solvers with Restarts},
  booktitle = {Principles and Practice of Constraint Programming -- {CP} 2009},
  editor    = {Ian P. Gent},
  series    = {Lecture Notes in Computer Science},
  volume    = {5732},
  pages     = {654--668},
  publisher = {Springer},
  year      = {2009}
}

@inproceedings{szeider,
  author    = {Stefan Szeider},
  title     = {On Fixed-Parameter Tractable Parameterizations of {SAT}},
  booktitle = {Theory and Applications of Satisfiability Testing, 6th International
               Conference, {SAT} 2003, Selected and Revised Papers},
  editor    = {Enrico Giunchiglia and Armando Tacchella},
  series    = {Lecture Notes in Computer Science},
  volume    = {2919},
  pages     = {188--202},
  publisher = {Springer},
  address   = {Berlin, Heidelberg},
  year      = {2004},
   
}

@techreport{mateescu2011treewidth,
  author      = {Robert Mateescu},
  title       = {Treewidth in Industrial {SAT} Benchmarks},
  institution = {Microsoft Research},
  number      = {MSR-TR-2011-22},
  year        = {2011},
  month       = feb,
  address     = {Cambridge, UK},
  url         = {https://www.microsoft.com/en-us/research/publication/treewidth-in-industrial-sat-benchmarks/}
}

@phdthesis{oh2016empirical,
  author  = {Oh, Chanseok},
  title   = {Improving {SAT} Solvers by Exploiting Empirical Characteristics of {CDCL}},
  school  = {New York University},
  year    = {2016}
}

@inproceedings{li-hcs,
  author    = {Chunxiao Li and Jonathan Chung and Soham Mukherjee and Marc Vinyals and Noah Fleming and Antonina Kolokolova and Alice Mu and Vijay Ganesh},
  title     = {On the Hierarchical Community Structure of Practical {Boolean} Formulas},
  booktitle = {Theory and Applications of Satisfiability Testing -- {SAT} 2021},
  editor    = {Chu{-}Min Li and Felip Many{\`a}},
  series    = {Lecture Notes in Computer Science},
  volume    = {12831},
  pages     = {359--376},
  publisher = {Springer},
  year      = {2021},
   
}

@inproceedings{singh2026,
  author    = {Sunidhi Singh and Vincent Liew and Marc Vinyals and Vijay Ganesh},
  title     = {Proofdoors and Efficiency of {CDCL} Solvers},
  year      = {2026},
  note      = {Preprint: \url{https://arxiv.org/abs/2603.26286}}
}

@inproceedings{demoura2008z3,
  author    = {de Moura, Leonardo and Bj{\o}rner, Nikolaj},
  title     = {{Z3}: An Efficient {SMT} Solver},
  booktitle = {Tools and Algorithms for the Construction and Analysis of Systems (TACAS)},
  series    = {Lecture Notes in Computer Science},
  volume    = {4963},
  pages     = {337--340},
  publisher = {Springer},
  year      = {2008},
   
}

@inproceedings{kilby2005backbones,
  author    = {Kilby, Philip and Slaney, John and Thi{\'e}baux, Sylvie and Walsh, Toby},
  title     = {Backbones and Backdoors in Satisfiability},
  booktitle = {Proceedings of the Twentieth National Conference on Artificial Intelligence (AAAI 2005)},
  pages     = {1368--1373},
  year      = {2005},
  publisher = {AAAI Press},
  address   = {Pittsburgh, Pennsylvania, USA}
}

@inproceedings{zulkoski2018learning,
  author    = {Zulkoski, Edward and Martins, Ruben and Wintersteiger, Christoph M. and Robere, Robert and Liang, Jia Hui and Czarnecki, Krzysztof and Ganesh, Vijay},
  title     = {Learning-Sensitive Backdoors with Restarts},
  booktitle = {Principles and Practice of Constraint Programming -- CP 2018},
  editor    = {Hooker, John N.},
  series    = {Lecture Notes in Computer Science},
  volume    = {11008},
  pages     = {453--469},
  year      = {2018},
  publisher = {Springer},
  address   = {Cham},
   
}

@inproceedings{cook1971complexity,
  author    = {Cook, Stephen A.},
  title     = {The Complexity of Theorem-Proving Procedures},
  booktitle = {Proceedings of the Third Annual ACM Symposium on Theory of Computing (STOC '71)},
  pages     = {151--158},
  year      = {1971},
  publisher = {ACM},
   
}

@article{clarke2001bmc,
  author  = {Clarke, Edmund and Biere, Armin and Raimi, Richard and Zhu, Yunshan},
  title   = {Bounded Model Checking Using Satisfiability Solving},
  journal = {Formal Methods in System Design},
  volume  = {19},
  number  = {1},
  pages   = {7--34},
  year    = {2001},
  publisher = {Springer},
   
}

@incollection{ganesh2021unreasonable,
  author    = {Ganesh, Vijay and Vardi, Moshe Y.},
  title     = {On the Unreasonable Effectiveness of {SAT} Solvers},
  booktitle = {Beyond the Worst-Case Analysis of Algorithms},
  editor    = {Roughgarden, Tim},
  pages     = {547--566},
  year      = {2021},
  publisher = {Cambridge University Press},
   
}
\end{document}